\documentclass[letterpaper, 10 pt, conference]{ieeeconf}  

\usepackage{textcomp}
\usepackage{cite}
\usepackage{amsmath,amssymb,amsfonts}
\usepackage{graphicx}
\usepackage{url}
\usepackage{algpseudocode, algorithmicx,algorithm}
\usepackage{tikz}
\usetikzlibrary{shapes.geometric, positioning, calc}

\definecolor{planeTopCyber}{RGB}{218, 235, 244}
\definecolor{planeSideCyber}{RGB}{180, 210, 225}
\definecolor{planeTopPhys}{RGB}{225, 227, 233}
\definecolor{planeSidePhys}{RGB}{195, 200, 210}
\definecolor{planeTopAux}{RGB}{232, 226, 242}
\definecolor{planeSideAux}{RGB}{205, 198, 220}
\definecolor{nodeBlue}{RGB}{85, 135, 175}
\definecolor{nodeOrange}{RGB}{220, 130, 30}
\definecolor{darkgreen}{RGB}{0,120,0}

\usepackage{graphicx}
\usepackage{subcaption}
\IEEEoverridecommandlockouts                              

\title{\LARGE \bf
Decentralized Optimal Equilibrium Learning Over Dynamic Networks

}

\author{Seref Taha Kiremitci$^{1}$ and Muhammed O. Sayin$^{1}$%
\thanks{$^{1}$Seref Taha Kiremitci and Muhammed O. Sayin are with the Department of Electrical and Electronics Engineering, Bilkent University, Ankara, T\"urkiye. Emails: {\tt\small taha.kiremitci@bilkent.edu.tr}, {\tt\small sayin@ee.bilkent.edu.tr}}%
}

\newcounter{lemma}
\newenvironment{lemma}{\refstepcounter{lemma}
\noindent 
\textit{Lemma \thelemma.} \em \rmfamily}{}

\newcounter{claim}

\newcounter{proposition}
\newenvironment{proposition}{\refstepcounter{proposition}
\noindent 
\textit{Proposition \theproposition.} \em \rmfamily}{}

\newenvironment{proposition*}[1]{\refstepcounter{proposition}
\noindent 
\textit{Proposition \theproposition~(#1)} \em \rmfamily}{}

\newcounter{theorem}
\newenvironment{theorem}{\refstepcounter{theorem}
\noindent 
\textit{Theorem \thetheorem.} \em \rmfamily}{}

\newcounter{corollary}

\newcounter{definition}
\newenvironment{definition*}[1]{\refstepcounter{definition}
\noindent 
\textit{Definition \thedefinition~(#1)} \rmfamily}{}

\newcounter{problem}

\newcounter{assumption}
\newenvironment{assumption*}[1]{\refstepcounter{assumption}
\noindent 
\textit{Assumption \theassumption~(#1)} \rmfamily}{}
\newenvironment{assumption}{\refstepcounter{assumption}
\noindent 
\textit{Assumption \theassumption.} \em \rmfamily}{}

\newcounter{remark}
\newenvironment{remark}{\refstepcounter{remark}
\noindent 
\textit{Remark \theremark.} \em \rmfamily}{}

\newcounter{example}

\DeclareMathOperator*{\argmax}{arg\,max}
\newcommand{\be}{\begin{equation}}
\newcommand{\ee}{\end{equation}}
\newcommand{\nn}{\nonumber}
\newcommand{\E}{\mathrm{E}}

\newcommand{\SW}{\mathrm{SW}}
\newcommand{\eSW}{\overline{\mathrm{SW}}}

\newcommand{\explore}{\mathcal{E}}

\newcommand{\Aeq}{A_{\epsilon}^{\mathrm{eq}}}
\newcommand{\efu}{\overline{u}}
\newcommand{\efA}{\overline{A}}
\newcommand{\efa}{\overline{a}}

\usepackage[hidelinks]{hyperref}
\begin{document}

\maketitle
\thispagestyle{empty}
\pagestyle{empty}

\begin{abstract}

This paper studies decentralized learning of socially optimal equilibria in finite normal-form games over dynamic communication networks. Each agent observes only its own realized payoffs, does not know the game a priori, and can communicate only with time-varying neighbors using low-bandwidth messages. We propose networked decentralized optimal equilibrium learning dynamics in which agents generate randomized semantic content/discontent signals from local payoff comparisons and exchange time-stamped time-stacked tables rather than raw actions, payoff information or local estimates/parameters. The method combines table fusion with temporal majority reconstruction to mitigate dynamic communication while preserving fully decentralized operation. We establish finite-time logarithmic regret guarantees, with an in-phase exploration perturbation, for optimal equilibrium selection under utilitarian and proportional-fair social welfare objectives. Simulation results further show that the proposed approach can effectively select socially desirable equilibria over dynamic communication networks. 

\end{abstract}

\section{INTRODUCTION}

Decentralized multi-agent systems are increasingly central to modern engineered platforms, including networked control systems and distributed autonomous decision-making \cite{ref:Yuksel13,ref:Cao23b,ref:Basar98}. In such systems, a critical challenge is the balance between \emph{stability} and \emph{efficiency}. Equilibrium, providing robustness against unilateral deviations, need not be socially desirable. Welfare-maximization, being socially desirable, need not be stable under strategic behavior. Optimal equilibrium selection can balance stability and efficiency. Nevertheless, it is largely unexplored in the context of decentralized learning.

\begin{figure}[!t]
    \centering
\begin{tikzpicture}[scale=0.45,
    node default/.style={
        ellipse,
        fill=nodeBlue,
        draw=black!30,
        thin,
        minimum width=4.5mm,
        minimum height=2.5mm,
        inner sep=0pt
    },
    node default phys/.style={
        ellipse,
        fill=nodeBlue!75!gray,
        draw=black!30,
        thin,
        minimum width=4.5mm,
        minimum height=2.5mm,
        inner sep=0pt
    },
    edge normal/.style={draw=black!35, line width=1.6pt},
    edge failed/.style={draw=red!80, line width=1.8pt, dashed},
    edge inter/.style={draw=gray!60, line width=1.1pt, dashed}
]

\def\layersep{5}

\newcommand{\drawlayer}[3]{ 
    \begin{scope}[yshift=#1cm]
        \def\xbl{0} \def\ybl{0}
        \def\xbr{13} \def\ybr{0}
        \def\xtr{16} \def\ytr{4.0}
        \def\xtl{3} \def\ytl{4.0}
        \def\thick{0.2}

        \ifdim #1 pt > 0pt
            \colorlet{currSide}{planeSideCyber}
        \else
            \colorlet{currSide}{planeSidePhys}
        \fi

        \fill[currSide] (\xbl, \ybl-\thick) -- (\xbr, \ybr-\thick) -- (\xtr, \ytr-\thick) -- (\xtl, \ytr-\thick) -- cycle;
        \fill[currSide] (\xbr, \ybr) -- (\xbr, \ybr-\thick) -- (\xtr, \ytr-\thick) -- (\xtr, \ytr) -- cycle;
        \fill[currSide] (\xbl, \ybl) -- (\xbl, \ybl-\thick) -- (\xbr, \ybr-\thick) -- (\xbr, \ybr) -- cycle;

        \filldraw[fill=#2, draw=black!30, thick, line join=round]
            (\xbl, \ybl) -- (\xbr, \ybr) -- (\xtr, \ytr) -- (\xtl, \ytl) -- cycle;

        \draw[black!30, thick, line join=round]
            (\xbl, \ybl) -- (\xbl, \ybl-\thick) -- (\xbr, \ybr-\thick) -- (\xtr, \ytr-\thick) -- (\xtr, \ytr);
        \draw[black!30, thick, line join=round] (\xbr, \ybr) -- (\xbr, \ybr-\thick);

        \node[anchor=south west, font=\sffamily\Large, text=black!80]
            at (\xbl+0.6, \ybl+0.05) {#3};
    \end{scope}
}

\newcommand{\drawauxlayer}[2]{ 
    \begin{scope}[yshift=#1cm]
        \def\xbl{0} \def\ybl{0}
        \def\xbr{13} \def\ybr{0}
        \def\xtr{16} \def\ytr{4.0}
        \def\xtl{3} \def\ytl{4.0}
        \def\thick{0.2}

        \fill[planeSideAux] (\xbl, \ybl-\thick) -- (\xbr, \ybr-\thick) -- (\xtr, \ytr-\thick) -- (\xtl, \ytr-\thick) -- cycle;
        \fill[planeSideAux] (\xbr, \ybr) -- (\xbr, \ybr-\thick) -- (\xtr, \ytr-\thick) -- (\xtr, \ytr) -- cycle;
        \fill[planeSideAux] (\xbl, \ybl) -- (\xbl, \ybl-\thick) -- (\xbr, \ybr-\thick) -- (\xbr, \ybr) -- cycle;

        \filldraw[fill=planeTopAux, draw=black!25, thick, line join=round]
            (\xbl, \ybl) -- (\xbr, \ybr) -- (\xtr, \ytr) -- (\xtl, \ytl) -- cycle;

        \draw[black!25, thick, line join=round]
            (\xbl, \ybl) -- (\xbl, \ybl-\thick) -- (\xbr, \ybr-\thick) -- (\xtr, \ytr-\thick) -- (\xtr, \ytr);
        \draw[black!25, thick, line join=round] (\xbr, \ybr) -- (\xbr, \ybr-\thick);

        \node[anchor=south west, font=\sffamily\Large, text=black!70]
            at (\xbl+0.6, \ybl+0.0) {#2};
    \end{scope}
}

\def\nodes{
    \coordinate (N1) at (4.6, 3.0);
    \coordinate (N2) at (3.1, 1.4);
    \coordinate (N3) at (5.9, 1.0);
    \coordinate (N4) at (8.0, 1.9);
    \coordinate (N5) at (10.8, 3.0);
    \coordinate (N6) at (12.8, 2.0);
    \coordinate (N7) at (11.0, 0.8);
}

\drawauxlayer{-1.6}{\fontsize{10}{10}$k-2$}

\drawlayer{0}{planeTopPhys}{\fontsize{10}{10}$k-1$}
\begin{scope}[yshift=0cm]
    \nodes

    \draw[edge normal] (N1) -- (N3);
    \draw[edge normal] (N2) -- (N1);
    \draw[edge normal] (N4) -- (N5);
    \draw[edge normal] (N4) -- (N6);
    \draw[edge normal] (N6) -- (N7);
    \draw[edge normal] (N3) -- (N4);


    \foreach \i in {1,2,3,4,5,6,7} {
        \node[node default phys] at (N\i) {};
    }
\end{scope}

\begin{scope}
    \nodes
    \foreach \i in {1,...,7} {
        \draw[edge inter] (N\i) -- ([yshift=\layersep cm]N\i);
    }
\end{scope}

\drawlayer{\layersep}{planeTopCyber}{\fontsize{10}{10} $k$}
\begin{scope}[yshift=\layersep cm]
    \nodes

    \draw[edge normal] (N1) -- (N2);
    \draw[edge normal] (N2) -- (N3);
    \draw[edge normal] (N2) -- (N4);
    \draw[edge normal] (N5) -- (N6);
    \draw[edge normal] (N6) -- (N7);

    \draw[edge normal] (N4) -- (N1);
    \draw[edge normal] (N4) -- (N6);
    \foreach \i in {1,2,3,4,5,6,7} {
        \node[node default] at (N\i) {};
    }
\end{scope}
\node[font=\sffamily\huge, text=black!55] at (8.0,-2.3) {$\vdots$};
\end{tikzpicture}

\caption{An illustration of decentralized learning over dynamic communication networks. Layers depict time-varying topology at instances $k-2, k-1, k$.}    \label{fig:network}
\end{figure}

For example, in repeated play of normal-form games, decentralized payoff-based learning dynamics have been developed to promote Pareto-efficient solutions in interdependent games \cite{ref:Marden14}, and have been extended to the selection of Pareto-efficient Nash equilibria in distributed systems \cite{ref:Pradelski12}. Recently, \cite{ref:Yang25} has addressed decentralized learning of equilibrium maximizing the minimum transformed utility. All \cite{ref:Marden14,ref:Pradelski12,ref:Yang25} rely on stochastic stability guarantees as certain exploration parameters decay to zero. However, lower exploration levels lead to slower learning rates, limiting the practicality of such dynamics. Decentralized optimal equilibrium learning has also been studied in the context of stochastic games yet only for special classes of games, such as common-interest games and no-conflict games \cite{ref:Yongacoglu22,ref:Donmez25,ref:Christianos23}. All these works do not use any information exchange across agents while communication can play a crucial role in effective coordination \cite{ref:Nedic15,ref:Zhang18}. 


Rather recently, \cite{ref:Kiremitci26a} showed that single-bit communication can be effective to drive broad classes of general games toward socially optimal outcomes, while \cite{ref:Kiremitci26b} extended this idea to decentralized optimal equilibrium learning in stochastic games under general social welfare objectives. Both \cite{ref:Kiremitci26a} and \cite{ref:Kiremitci26b} use probabilistic \textit{content/discontent signals} rather than direct sharing of model parameters or estimates, and they provide explicit logarithmic finite-time guarantees. However, both of these schemes rely on fully connected and reliable communication, which is restrictive, e.g., in geographically distributed networks, and can make the learning mechanism sensitive to temporal link failures.

This paper addresses decentralized optimal equilibrium learning in repeated play of normal-form games over dynamic and unreliable communication links, as illustrated in Fig.~\ref{fig:network}. Agents observe only their own realized payoffs, do not know the game a priori, and communicate through finite-support semantic messages. Dynamic networks pose challenges that are not a direct extension of \cite{ref:Kiremitci26a,ref:Kiremitci26b}: content/discontent bits are phase-specific, so they cannot be averaged across time as in consensus, and neighbor-only failing links make network-wide bits delayed, missing, and statistically dependent across overlapping table windows.

Relative to \cite{ref:Kiremitci26a,ref:Kiremitci26b}, which assume fully connected reliable broadcast of single-bit signals, our contributions are: (i) Networked-DOEL, which reconstructs network-wide content from time-stamped, time-stacked tables, conservative fusion, and temporal majority polling; (ii) a finite-time regret bound of the same form as in those works, but for dynamic neighbor-only graphs, by restoring independence of delayed access events rather than only rescaling success probabilities; (iii) experiments on a restricted Erd\H{o}s--R\'enyi backbone with dropouts, including the complete-graph communication setting of \cite{ref:Kiremitci26b}. We examine both utilitarian (sum of local payoffs) and proportional-fair (product of local payoffs) welfare objectives.

Our work is related more broadly to the large literature on networked coordination, but differs from it in an essential way. In distributed optimization and consensus over graphs, the network is typically used to aggregate estimates, gradients, or state variables toward a common point. This perspective includes distributed optimization or reinforcement learning over time-varying directed graphs \cite{ref:Nedic15,ref:Zhang18}, as well as communication-efficient schemes based on sign-based or compressed information exchange \cite{ref:Zhang19,ref:Sayin13,ref:Sayin14,ref:Cao23a}. Such works show that sparse and time-varying network structure can be exploited effectively, but their objectives are fundamentally cooperative and usually rely on structural properties such as convexity and smoothness, specifically for payoff-sum maximization. In contrast, our problem is game-theoretic. Here, the agents are strategically coupled and the target is reaching optimal equilibrium over the highly unstructured equilibrium set with respect to broad welfare objectives beyond payoff-sum maximization even under dynamic communication networks.

The rest of the paper is organized as follows. Section~\ref{sec:problem_formulation} introduces the problem formulation. Section~\ref{sec:alg_nfg} presents the proposed networked decentralized optimal equilibrium learning dynamics. Section~\ref{sec:main_result} and Section~\ref{sec:sim_game}, respectively, provide analytical and numerical results. Finally, Section~\ref{sec:conclusion} concludes the paper.

\section{PROBLEM FORMULATION}\label{sec:problem_formulation}

Consider an $n$-agent \emph{normal-form} game characterized by the tuple
$\mathcal{G} := \langle A^i,u^i\rangle_{i\in\mathcal N}$,
where $\mathcal N:=\{1,\ldots,n\}$ is the index set of agents,
$A^i$ is the finite action set of agent $i$,
$A:=\prod_{j\in\mathcal N}A^j$ is the joint action space,
and $u^i:A\to\mathbb R$ is agent $i$'s payoff function.

Let $a=(a^i)_{i\in\mathcal N}\in A$ be the action profile of all agents and $a^{-i}:=(a^j)_{j\neq i}$ denote actions of agents other than agent $i$. Then, an action profile $a\in A$ is a \emph{pure-strategy $\epsilon$-Nash equilibrium} if, for each agent $i$,
\begin{equation}\label{eq:best_nf}
u^i(a^i,a^{-i})
\ge u^i(\tilde a^i,a^{-i})-\epsilon^i,
\quad \forall \tilde a^i\in A^i,
\end{equation}
for some $\epsilon^i\ge 0$.
Given $\epsilon=(\epsilon^i)_{i=1}^n$, we define $\epsilon$-equilibrium set as
\begin{equation}\label{eq:eqset_nf}
A_{\epsilon}^{\mathrm{eq}}
:= \Big\{a\in A:
u^i(a) \ge \max_{\tilde a^i\in A^i}u^i(\tilde a^i,a^{-i})-\epsilon^i,\ \forall i\Big\}.
\end{equation}

In general, pure-strategy equilibrium may not exist. However, there always exist sufficiently large tolerance levels $\epsilon$ such that a pure-strategy $\epsilon$-equilibrium can exist. Furthermore, different tolerance levels can capture heterogenous sensitivities to suboptimal responses, ranging from full non-cooperativeness as $\epsilon^i\rightarrow 0$ to full cooperativeness as $\epsilon^i\rightarrow\infty$.

Given the payoffs $(u^i)_{i=1}^n$, the social welfare of all agents is given by
\begin{equation}\label{eq:SW}
\SW(a):=\sum_{i=1}^n w^i\cdot g^i\!\big(u^i(a)\big),
\end{equation}
for positive weights $w^i>0$ and monotonically increasing transformations $g^i(\cdot)$.

For example, for $w^i=1$ and $g^i(x) = x$, the social welfare corresponds to the sum of payoffs $\sum_{i} u^i(a)$, studied in multi-agent systems for Pareto optimal solutions, e.g., see \cite{ref:Marden14,ref:Pradelski12,ref:Nedic15,ref:Zhang18}. On the other hand, for $w^i=1$ and $g^i(x) = \log (x)$, the social welfare corresponds to the product of payoffs $\log\prod_{i} u^i(a)$, yielding fair solutions across agents \cite{ref:Kelly98}.  

We say that equilibrium $a_\ast\in A_{\epsilon}^{\mathrm{eq}}$ is \emph{optimal} over (non-empty) $\epsilon$-equilibrium set $\Aeq\neq \varnothing$ provided that
\begin{equation}\label{eq:optimal_nf}
a_\ast \in \argmax_{a\in \Aeq}\left\{\SW(a)\right\}.
\end{equation}

We consider that agents repeatedly play the game $\mathcal{G}$ to learn optimal equilibria through decentralized learning dynamics. They do not know the model and do not observe the actions of the others. They only observe their own payoff with perfect recall. We let agents communicate with certain other agents (e.g., the ones in proximity) over a communication network through finite support messages while the communication links can change or fail dynamically, as illustrated in Fig. \ref{fig:network}. 

In this paper, our goal is to \textit{design resilient decentralized optimal equilibrium learning algorithms for such dynamic communication networks}. To quantify the effectiveness of the proposed solution, we focus on regret-based analysis. Particularly, given a time budget $T$, we define the cumulative regret of the agents' joint play $a_t$ at time $t$ relative to the optimal welfare in~\eqref{eq:optimal_nf} as
\begin{equation}\label{eq:regret}
R_T :=
T\!\max_{a\in A_{\epsilon}^{\mathrm{eq}}}
\left\{\SW(a)\right\}
 - \sum_{t=1}^{T}\SW(a_t).
\end{equation}

\section{DECENTRALIZED OPTIMAL EQUILIBRIUM LEARNING OVER NETWORKS}
\label{sec:alg_nfg}

In this section, we first describe Decentralized Optimal Equilibrium Learning (DOEL) framework, in which each agent can reliably communicate with all other agents \cite{ref:Kiremitci26b}. Although agents only exchange single-bit messages, as described in Subsection \ref{sec:DOEL}, full connectivity poses a challenge for practical applications. For example, there may not be a direct (one-hop) communication link in-between all agents. To mitigate these issues, we introduce Networked-DOEL dynamics, in which agents can temporally communicate with only a subset of agents over a dynamic communication network.

\subsection{DOEL Dynamics}\label{sec:DOEL}

The DOEL dynamics originally address decentralized equilibrium selection for stochastic games through explore-and-commit and online learning schemes. However, due to the limited space, here, we only focus on the explore-and-commit scheme for the special case of repeated play of normal-form games, in which the underlying environment has a single state and agents have single-stage objectives. 

During the exploitation stage, agents aim to coordinate in the best $\epsilon$-equilibrium. To this end, they need to identify (i) whether the explored action profile is $\epsilon$-equilibrium and (ii) whether it leads to the highest social welfare compared to other equilibria. 

To address (i), agents can divide the exploration stage into $K$ phases $k=1,\ldots,K$ with fixed lengths of $\kappa$ instances.  At each phase $k$, each agent $i$ randomly selects an action $a_{(k)}$. Then, within the phase $t=(k-1)\kappa + 1,\ldots, k\kappa$, the agent plays according to the distribution $\explore^i(a_{(k)}^i):A^i\rightarrow[0,1]$ defined by
\be
\explore^i(a_{(k)}^i)(a^i) = \left\{\begin{array}{ll}
1-\frac{|A^i|-1}{|A^i|} \cdot \varepsilon^i& \mbox{if } a^i = a_{(k)}^i\\
\frac{1}{|A^i|}\cdot\varepsilon^i & \mbox{if } a^i\neq a_{(k)}^i
\end{array}\right.,
\ee
where $\varepsilon^i\in (0,1)$ is the \textit{in-phase exploration} probability. Correspondingly, $a_t^i\sim \explore^i(a_{(k)}^i)$ independently for each $t=(k-1)\kappa + 1,\ldots, k\kappa$. 

Each agent can estimate the value of local actions by taking average of the payoffs received whenever the associated action taken. For example, we have
\be\label{eq:ave}
    \hat{u}_{(k)}^i(a^i) = \frac{\sum_{t=(k-1)\kappa+1}^{k\kappa} u_t^i \cdot \mathbb{I}_{\{a^i = a_t^i\}}}{\sum_{t=(k-1)\kappa+1}^{k\kappa} \mathbb{I}_{\{a^i = a_t^i\}}}
    \ee
for all $a^i$. Then, explored $a_{(k)}^i$ is the $\epsilon$-best response with respect to the estimate $\hat{u}_{(k)}^i$ if 
\be\label{eq:bestest}
\hat{u}_{(k)}^i(a_{(k)}^i) \geq \hat{u}_{(k)}^i(a^i) - \epsilon^i\quad\forall a^i\in A^i.
\ee
Furthermore, the explored action profile $a_{(k)}=(a_{(k)}^i)_{i\in \mathcal{N}}$ is $\epsilon$-equilibrium if \eqref{eq:bestest} holds for all $i\in \mathcal{N}$.

Recall that agents only observe local actions and payoffs. Even though they can check whether the action explored tolerable, they are not aware of whether everyone's explored actions are tolerable. To mitigate this issue, they can broadcast a single-bit tolerable $'1'$ or intolerable $'0'$ signal with each other at the end of the phase. If everyone signals $1$, then the agents can conclude that the explored action profile is $\epsilon$-equilibrium based on the local value estimates.


\begin{remark}
If agents are self-less, i.e., $\epsilon^i\rightarrow\infty$, then the problem reduces to decentralized welfare optimization and they do not need to divide the exploration stage into phases.  
\end{remark}


To address (ii), agents need to consider the social welfare \eqref{eq:SW}. However, local payoffs are not necessarily aligned with the social welfare.  In the DOEL dynamics, agents broadcast a \textit{randomized} single-bit \textit{content} $'1'$ or \textit{discontent} $'0'$ signal with each other if the local action is tolerable.\footnote{In the payoff-based dynamics of \cite{ref:Marden14,ref:Pradelski12}, agents locally keep and update a \textit{content/discontent} state without sharing it with the other agents.} More explicitly, if
$\hat{u}_{(k)}^i(a_{(k)}^i) \geq \max_{a^i} \hat{u}_{(k)}^i(a^i) - \epsilon^i$, then agent $i$ broadcasts
\be
s_{(k)}^i = \left\{\begin{array}{ll}
1 & \mbox{with probability } \xi^{w^i(C^i-g^i(\hat{u}_k^i(a_{(k)}^i)))}\\
0 & \mbox{otherwise}
\end{array}\right.
\ee
independently with some $\xi\in(0,1)$ and $C^i \geq g^i(u^i(a))$ for all $a\in A$, ensuring that $C^i-g^i(\hat{u}_k^i(a_{(k)}^i))\geq 0$ and $\xi^{w^i(C^i-g^i(\hat{u}_k^i(a_{(k)}^i)))}\in(0,1]$ is a valid probability. Due to the independent randomization, all agents would broadcast content signals for $\epsilon$-equilibrium $a_{(k)}$ based on the local value estimates with the conditional probability
\be\label{eq:prob}
\Pr(s_{(k)}^i=1\;\forall i \mid a_{(k)}) = \xi^{C-\SW(a_{(k)})},
\ee
where $C= \sum_i w^i C^i$ and $C\geq \SW(a_{(k)})$ by its definition. 

Observe that $a_{(k)}$ maximizing the probability \eqref{eq:prob} maximizes $\SW(\cdot)$ since $\xi\in(0,1)$. Indeed, the distribution over action profiles conditioned on content-endorsement by all agents corresponds to the entropy-regularized welfare maximization over the equilibrium set \cite[Remark 2]{ref:Kiremitci26b}. 

Another key challenge is that agents do not observe the explored action profile. Relying only on local observations and communication, the DOEL dynamics let agents commit to the local action that is the most frequently content-endorsed by all agents. Under mild assumptions, \cite{ref:Kiremitci26b} showed that the DOEL dynamics can attain at most logarithmic regret up to an in-phase exploration offset. 

\begin{figure*}[t!]
    \centering
    \includegraphics[width=\textwidth]{algorithm.pdf}
\caption{A figurative illustration of spatial table fusion and temporal table processing procedures for agent \(i=1\) at phase \(k\) over the four-agent communication network depicted at the top-right corner. The agent construct the table \(o_{(k)}^i\) based on the current content/discontent signal $s_{(k)}^i$ and the previous beliefs $\{\hat{s}_{(k),(k-l)}^{i,j}\}$ about the content/discontent signals of all agents $j\in \mathcal{N}$ and at previous phases $l=k-m+1,\ldots,k-1$. The agent receives the neighbor tables \(\{o_{(k)}^j\}_{j\in\mathcal{N}_{(k)}^i}\) and fuse them with the local table \(o_{(k)}^i\) to compute  \(\overline{o}_{(k)}^i\). Then, the agent uses \(\overline{o}_{(k)}^i\) and  \(\{\overline{o}_{(k-l)}^i\}_{l=1}^{n-1}\) to determine whether all agents were content at phase $k-n+1$ through majority polling.}
\label{fig:algoritma}
\end{figure*}

\subsection{Networked-DOEL Dynamics}\label{sec:RN-DOEL}

In the Networked-DOEL dynamics, agents are not able to reliably broadcast signals to all other agents. The communication network might dynamically change across phases, e.g., see Fig. \ref{fig:network}. To resolve this issue, the Networked-DOEL dynamics let agents share time-stamped and time-stacked messages, as illustrated in Fig. \ref{fig:algoritma}.

\textit{Time-Stamped Signals:} Content/discontent signals are specific to the current actions explored. However, the actions independently got explored  according to the uniform distribution across phases. Therefore, agents are not able to diffuse the content signals across the underlying communication network by aggregating all information shared as in consensus networks. Instead, they can share the content/discontent signals with a timestamp to identify the phases associated with the signals. They not only share their own content/discontent signals but also share the most recent content/discontent signals of all others based on the received signals from the neighboring agents, as commonly used for dynamic environments, e.g., see \cite{ref:Mitra22}. 

\textit{Time-Stacked Signals:} Here all signals matter beyond the most fresh ones. If the signals of agents do not reach to each other, then these agents cannot track the network-wide signals and this can degrade the performance. To mitigate this issue, agents can introduce redundancy by sharing not only the most recent content/discontent signals but also older signals.

More explicitly, let $\hat{s}_{(k),(l)}^{i,j}\in \{0,1,\varnothing\}$  denote agent $i$'s belief at phase $k$ about agent $j$'s content/discontent signal at phase $l$. Here, $\varnothing$ refers to missing or inconsistent information. Then, at phase $k$, agent $i$ shares an $n \times m$ dimensional matrix
\be\label{eq:sharedtable}
o_{(k)}^i = \begin{bmatrix} 
\hat{s}_{(k),(k-m+1)}^{i,1} & \cdots &\hat{s}_{(k),(k)}^{i,1} \\
\vdots & \ddots & \vdots\\
\hat{s}_{(k),(k-m+1)}^{i,n} & \cdots &\hat{s}_{(k),(k)}^{i,n}
\end{bmatrix} \in \{0,1,\varnothing\}^{n\times m}.
\ee
This only creates an $n\times m$-bit memory burden per agent, which can be further reduced, thereby yielding a scalable memory footprint.
If the underlying communication networks consistently have depth more than $m$, then some agents may not access to the signals of all others. On the other hand, a connected network can have the depth of at most $n-1$. Correspondingly, $m\geq n-1$ can ensure that all agents can access to the signals of all others as long as the networks remain connected.

\textit{Spatial Table Fusion:} Agents fuse the tables received by the neighboring agents to construct 
\begin{subequations}\label{eq:fusedtable}
\begin{flalign}
&\overline{o}_{(k)}^i = [\overline{s}_{(k),(k-l)}^{i}]_{l=k-m+1}^k\\
&\overline{s}_{(k),(l)}^{i} = [\overline{s}_{(k),(l)}^{i,j}]_{j\in\mathcal{N}},
\end{flalign}
\end{subequations}
as depicted in Fig. \ref{fig:algoritma}. Formally, we can represent the communication network at phase $k$ by an undirected graph $G_k = (\mathcal{N},E_{(k)})$, where $\mathcal{N}$ is the set of vertices/nodes corresponding to agents and $E_{(k)}\subset \mathcal{N}\times\mathcal{N}$ is the set of edges depicting the communication links. For example, the pair $(i,j),(j,i)\in E_{(k)}$ if agents $i$ and $j$ can exchange messages at phase $k$. We denote the neighborhood of agent $i$ at phase $k$ by $\mathcal{N}_{(k)}^i := \{j\in \mathcal{N}: (i,j)\in E_{(k)}\}$. Then, spatial table fusion can accept signals consistent across tables $\{o_{(k)}^r\}_{r\in \mathcal{N}_{(k)}^i}$ and
\be\label{eq:interbelief}
\overline{s}_{(k),(l)}^{i,j} = \left\{\begin{array}{ll}
0 & \mbox{if } \hat{s}_{(k),(l)}^{r,j} = 0 \;\forall r\in \mathcal{N}_{(k)}^i : \hat{s}_{(k),(l)}^{r,j}\neq \varnothing \\
1 & \mbox{if } \hat{s}_{(k),(l)}^{r,j} = 1 \;\forall r\in \mathcal{N}_{(k)}^i : \hat{s}_{(k),(l)}^{r,j}\neq \varnothing,\\
\varnothing & \mbox{otherwise}.
\end{array}\right.
\ee

Furthermore, at the next phase, the beliefs get updated as
\be\label{eq:beliefupdate}
\hat{s}_{(k+1),(k-l)}^{i,j} = \left\{\begin{array}{ll}
\overline{s}_{(k),(k-l+1)}^{i,j} & \mbox{if } l>0\\
s_{(k+1)}^i &\mbox{if } l = 0 \mbox{ and } j=i\\
\varnothing &\mbox{otherwise}.
\end{array}\right.
\ee
to construct the table $o_{(k+1)}^i$ to be shared with the neighboring agents.

\textit{Temporal Processing of Tables:} Tables $\{o_{(k)}^r\}_{r\in\mathcal{N}}$ received at phase $k$ include beliefs $\hat{s}_{(k),(k-l)}^{r}$ about older phases $k-l$. By considering older versions of these beliefs, agents can mitigate the impact of missing links. To this end, agents can store tables fused in memory and memory size is proportional to the table depth $m$. 

For example, consider the example depicted in Fig. \ref{fig:algoritma} with four agents. At phase $k$, table $\overline{o}_{(k)}^i$ includes the earliest beliefs $\hat{s}_{(k),(k-3)}^i$ while any tables $\overline{o}_{(k-l)}^i$ for $l>3$ only consist of beliefs earlier than that, making these tables redundant to store. Hence, agents can have finite memory in which they store tables $\{\overline{o}_{(k-l)}^i\}_{l=1}^{m-1}$.

Given the tables $\{\overline{o}_{(k-l)}^i\}_{l=1}^{m-1}$ stored in the memory and the new table $\overline{o}_{(k)}^i$, agent $i$ can use \textit{majority polling} over $\{\overline{s}_{(k-l),(k-m+1)}^i\}_{l=0}^{m-1}$ to determine if all were content at phase $k-m+1$. 
Define $b_{(l),(k)}^i=\varnothing$ if $\{j\in \mathcal{N} \;:\; \overline{s}_{(l),(k)}^{i,j}\neq \varnothing\}$ is an empty set, and otherwise
\be
b_{(l),(k)}^i := \prod_{j\in \mathcal{N} \;:\; \overline{s}_{(l),(k)}^{i,j}\neq \varnothing} \overline{s}_{(l),(k)}^{i,j},
\ee
which is one if all information signals $\overline{s}_{(l),(k)}^{i,j}$ are ones. Then, the agent detects network-wide content endorsement for the local explored action $a_{(k-m+1)}^i$ if
\be\label{eq:polling}
\sum_{l=0}^{m-1} b_{(k-l),(k-m+1)}^i \geq \frac{1}{2}\sum_{l=0}^{m-1} \mathbb{I}_{\{b_{(k-l),(k-m+1)}^i\neq \varnothing\}}.
\ee
In other words, the local action gets network-wise content-endorsement if network-wide content endorsement get detected at least half of the network-wide signals available.

\begin{algorithm}[t!]
\caption{Networked-DOEL Dynamics}
\label{alg:main}
\begin{algorithmic}
\small
\Require{$\xi\in(0,1)$, $\epsilon^i\ge 0$, $\varepsilon^i\in(0,1)$, $C^i \geq \max_u g^i(u)$}
\State \hspace{0.5em}\textbf{Initialize:} counter $c^i(a^i)=0$ for all $a^i\in A^i$     
\Statex
    \tikz[remember picture,overlay] {
        \node[rotate=90,anchor=south,yshift=5pt] at (0.2,-3.5) {\scriptsize \textbf{Exploration}};
        \draw[thick] (0,-7.2) -- (0,0.1);
    }
    \vspace{-0.3cm}
\State \hspace{0.5em}\textbf{for} each phase $k=1,\ldots,K$ \textbf{do}
    \State \hspace{2em}explore local action $a_{(k)}^i \sim \mathrm{Uniform}(A^i)$ independently
    \State \hspace{2em}\textbf{for} each stage $t=(k-1)\kappa+1,\ldots,k\kappa$ \textbf{do}
        \State \hspace{4em}play $\tilde a_t^i\sim \explore^i(a_{(k)}^i)$ \Comment{Simultaneous play}
        \State \hspace{4em}receive payoff $u_t^i$
    \State \hspace{2em}\textbf{end for}
    \State \hspace{2em}compute $\hat{u}_{(k)}^i(a^i)$ for all $a^i$ as in \eqref{eq:ave}
    \State \hspace{2em}compute $\hat{u}_{*,(k)}^i = \max_{a^i} \hat{u}_{(k)}^i(a^i)$ 
    \State \hspace{2em}generate content/discontent signal:
    \[\hspace{2em}
    s_{(k)}^i=
    \begin{cases}
    \mathbb I_{\{\hat{u}_{(k)}^i(a_{(k)}^i) \ge \hat u_{*,k}^i-\epsilon^i\}} & \text{w.p. } \xi^{\,w^i(C^i-g^i(\hat{u}_k^i(a_{(k)}^i)))}\\
    0 & \text{o.w.}
    \end{cases}
    \]
    \State \hspace{2em}construct the table $o_{(k)}^i$ according to \eqref{eq:sharedtable} and \eqref{eq:beliefupdate}
\State \hspace{2em}share $o_{(k)}^i$ and receive $\{o_{(k)}^j\}_{j\in\mathcal{N}_{(k)}^i}$
\State \hspace{2em}construct $\overline{o}_{(k)}^i$ by fusing $o_{(k)}^i$ and $\{o_{(k)}^j\}_{j\in\mathcal{N}_{(k)}^i}$ as in \eqref{eq:interbelief}
\State \hspace{2em}\textbf{if} $k \geq m$ and \eqref{eq:polling} holds \textbf{then}
\State \hspace{4em}increment the counter $c^i(a_{(k-m+1)})$ by one 
\State \hspace{2em}\textbf{end if}
\State \hspace{0.5em}\textbf{end for}
    \vspace{-0.3cm}
    \Statex
    \tikz[remember picture,overlay] {
        \node[rotate=90,anchor=south,yshift=5pt] at (0.2,-1.0) {\scriptsize \textbf{Exploitation}};
        \draw[thick] (0,-1.8) -- (0,-0.1);
    }
\State \hspace{0.5em}identify $\hat a^i_{(K)} \in \arg\max_{a^i\in A^i}\{c_{(K)}^i(a^i)\}$

\Statex \hspace{0.5em}\textbf{for} each stage $t > \kappa K$ \textbf{do}
    \State \hspace{2em}play $a_t^i = \hat a^i_{(K)} $ \Comment{Simultaneous play}
    \State \hspace{2em}receive payoff $u_t^i$
\Statex \hspace{0.5em}\textbf{end for}
\end{algorithmic}
\end{algorithm}

Conservative table fusion accepting only signals consistent across neighboring agents and majority polling over network-wide available signals reduce the sensitivity of the dynamics to erroneous signal transmissions beyond link failures. 

Algorithm \ref{alg:main} tabulates the Networked-DOEL dynamics for the typical agent $i$.

\section{ANALYTICAL RESULTS}
\label{sec:main_result}
In this section, we characterize the expected regret \eqref{eq:regret} for the Networked-DOEL dynamics over dynamic communication networks. To this end, we consider that the underlying communication network $G_k$ at phase $k$ gets generated \textit{exogenously} according to the Erdős--Rényi $G(n,p)$ model. There can be a direct communication link between any pair of agents with probability $p\in(0,1]$ independent of other pairs, previous networks, and the Networked-DOEL dynamics. 

Let $q_m^i\in[0,1]$ denote the probability that agent $i$ can access to the signals of all others within $m$ consecutive phases. Across $m$ consecutive phases, there may not be any one-length, i.e., direct, paths between agent $i$ and some other agent $j$ with probability $(1-p)^m$. This yields that
\be\label{eq:q_bound}
q_m^i \geq (1-(1-p)^m)^{n-1} =: q_{m}
\ee
and $q_{m}>0$ since $p\neq 0$. Hence, agents can access to network-wide signals $\{s_{(k)}^i\}_{i\in \mathcal N}$ associated with any phase $k$ with at most $m$-phase delay with some \textit{positive} probability. Indeed, information can also reach in multiple hops. However, \eqref{eq:q_bound} provides a tractable lower bound, growing with the table width $m$ while decaying with the number of agents and edge probability $p$. 

The Networked-DOEL dynamics involve in-phase exploration so that agents can identify whether the explored action is tolerable or not. However, since the other agents also conduct in-phase exploration, agents effectively identify whether the explored action is tolerable against exploration-perturbed actions of others. Therefore, we introduce the \textit{effective payoff} $\efu^i(a^i,a^{-i}) := u^i(a^i,\explore^{-i}(a^{-i}))$ for all $i$ and $(a^i,a^{-i})$, where $\explore^{-i}(a^{-i})=(\explore^j(a^j))_{j\neq i}$. Given the effective payoffs, we define the effective welfare and the effective $\epsilon$-equilibrium set of the corresponding \textit{effective game} $\overline{\mathcal{G}}=\langle A^i,\efu^i\rangle_{i\in \mathcal{N}}$, resp., by
\begin{flalign}
&\eSW(a)=\sum_{i=1}^n w^i\cdot g^i(\efu^i(a)),\\
&\efA_{\epsilon}^{\mathrm{eq}}
= \Big\{a\in A: \efu^i(a) \ge \max_{\tilde a^i\in A^i}\efu^i(\tilde a^i,a^{-i})-\epsilon^i,\ \forall i\Big\}.
\end{flalign}

We make the following assumptions:

\begin{assumption}\label{assm:equilibrium}
\begin{itemize}
\item[(i)] The effective $\epsilon$-equilibrium set $\efA_{\epsilon}^{\mathrm{eq}}$ is non-empty.
\item[(ii)] There exists unique effective equilibrium $\efa^*\in \efA_{\epsilon}^{\mathrm{eq}}$ maximizing the effective welfare $\eSW$, i.e.,
\be
\eSW(\efa^*) > \eSW(\efa)\quad\forall \efa\in \efA_{\epsilon}^{\mathrm{eq}}\mbox{ and } \efa \neq \efa^*.
\ee
\item[(iii)] Let $\Xi_o>0$ denote the gap between the best and second best(s) as
\be
\Xi_o := \eSW(\efa^*) - \max_{\efa\in \efA_{\epsilon}^{\mathrm{eq}}\setminus\efa^*} \{\eSW(\efa)\}.
\ee 
Then, for some $\delta > 0$, we have
\be
0 < \xi < (|\efA_{\epsilon}^{\mathrm{eq}}|+\delta)^{-1/\Xi_o}. 
\ee
\end{itemize}
\end{assumption}

This assumption ensures that agents can learn the unique optimal equilibrium in a decentralized way through randomized content/discontent signals. 

\begin{assumption}\label{assm:estimation}
The phase length $\kappa$ is sufficiently long such that in-phase estimation errors are negligible, i.e., $\hat{u}_{(k)}^i(a^i_{(k)}) = \efu^i(a^i_{(k)},\efa_{(k)}^{-i})$ and $\hat{u}_{*,(k)}^i = \max_{a^i\in A^i}\efu^i(a^i,\efa_{(k)}^{-i})$.
\end{assumption}

The estimates \(\hat{u}_{(k)}^i\) in \eqref{eq:ave} use only agent \(i\)'s own in-phase payoffs, so they do not depend on \(G_k\). Ordinary concentration therefore applies directly, as analyzed in \cite{ref:Kiremitci26b}: Assumption 2 holds whenever \(\kappa\) is large enough for empirical payoffs to concentrate. We use this exact-equality form to isolate communication error in Theorem 1. In real systems, phases must be long enough relative to payoff variability; otherwise content bits can be wrong and the bound does not apply.

The following theorem shows that the Networked-DOEL dynamics have logarithmic regret with an in-phase exploration perturbation.

{\medskip}

\begin{theorem}
Consider the repeated play of a normal-form game $\mathcal{G}=\langle A^i,u^i\rangle_{i\in \mathcal{N}}$. Suppose that each agent follows Algorithm \ref{alg:main} and Assumptions \ref{assm:equilibrium} and \ref{assm:estimation} hold. Furthermore, at each phase, let the communication network be generated according to the Erdős--Rényi $G(n,p)$ model. If the exploration length be $K\in O(\log T / q_m^2)$, then the expected regret is bounded by $\E[R_T] \in O(\varepsilon T + \log T)$.
\end{theorem}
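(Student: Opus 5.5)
We decompose the regret over the horizon into the exploration stage ($t\le \kappa K$) and the exploitation stage ($t>\kappa K$). During exploration we bound the per-stage regret by the welfare range $\max_a \SW(a)-\min_a\SW(a)$, which is finite, so this stage contributes $O(\kappa K)=O(\log T/q_m^2)$. During exploitation we split again. On the good event $\mathcal{E}_K:=\{\hat a_{(K)}=\efa^*\}$, every agent commits to $\efa^{*,i}$. The regret per stage is then $\max_{a\in\Aeq}\SW(a)-\SW(\efa^*)$. We argue this gap is $O(\varepsilon)$, because $\efu^i$ differs from $u^i$ by at most $O(\varepsilon)$ times the payoff range. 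On the complementary event we pay at most the welfare range per stage. Hence $\E[R_T]\le O(\log T)+O(\varepsilon T)+T\cdot O(1)\cdot\Pr(\mathcal{E}_K^c)$. The remaining task is to show $\Pr(\mathcal{E}_K^c)\le O(1/T)$ when $K=c\log T/q_m^2$ for a suitable constant $c$.

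To get this concentration we study, for each agent $i$ and local action $a^i$, the counter $c^i(a^i)=\sum_{k}X_k^i(a^i)$. Here $X_k^i(a^i)$ is the indicator that $a^i_{(k)}=a^i$ and that \eqref{eq:polling} fires for phase $k$ at time $k+m-1$. By Assumption~\ref{assm:estimation}, the content bit of every agent equals $\mathbb{I}\{a_{(k)}\in\efA_{\epsilon}^{\mathrm{eq}}\}$ thinned by independent Bernoulli variables. Hence
\[
\Pr\big(s_{(k)}^j=1\ \forall j\mid a_{(k)}\big)=\xi^{C-\eSW(a_{(k)})}\,\mathbb{I}\{a_{(k)}\in\efA_{\epsilon}^{\mathrm{eq}}\},
\]
as in \eqref{eq:prob}. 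The key communication lemma concerns the conservative fusion \eqref{eq:interbelief} and the belief update \eqref{eq:beliefupdate}. These never create a wrong non-$\varnothing$ entry, since every non-$\varnothing$ entry is a faithful copy of the true bit. Consequently, each non-$\varnothing$ $b^i_{(\cdot),(k)}$ with all $n$ entries present equals the true network-wide product. Partially filled rows can only err toward $1$ when some $0$ is missing, which is the issue handled next. We then define the access event $F_k^i$: within phases $k,\dots,k+m-1$ agent $i$ has a direct link to every $j$. By \eqref{eq:q_bound}, $\Pr(F_k^i)\ge q_m$. On $F_k^i$ the last table $\overline{o}^i_{(k+m-1)}$ contains the full row for phase $k$. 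We will modify the detection analysis so that the counter increment is correct on $F_k^i$ and is otherwise treated as a possibly adversarial error.

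The main obstacle is that the $X_k^i$ are not independent across $k$, because overlapping windows share graph realizations $G_k,\dots,G_{k+m-1}$. Moreover, the majority poll may mis-detect when $F_k^i$ fails. I would first restore independence by subsampling phases spaced $m$ apart: the windows for $k, k+m, k+2m,\dots$ use disjoint graphs and independent explorations, so they are i.i.d. within each of the $m$ residue classes. On each class I apply Hoeffding's inequality to the good increments $\mathbb{I}\{a_{(k)}^i=a^i,\,F_k^i,\,\text{all content}\}$. For $a^i=\efa^{*,i}$ the mean is at least $q_m\,\xi^{C-\eSW(\efa^*)}/|A|$. For a competing $a^i$, summing over profiles in $\efA_{\epsilon}^{\mathrm{eq}}$ with that $i$-th component, the mean is at most $\xi^{C-\eSW(\efa^*)}\,\xi^{\Xi_o}(|\efA_{\epsilon}^{\mathrm{eq}}|-1)/|A|$. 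Assumption~\ref{assm:equilibrium}(iii) gives $\xi^{\Xi_o}(|\efA^{\mathrm{eq}}_\epsilon|+\delta)<1$, which yields a positive margin. Bad increments on $(F_k^i)^c$ must be absorbed into this margin. Here I would try to show that detection errors are biased toward $\varnothing$ or toward the majority of correct full rows, so that they occur with probability at most proportional to $(1-q_m)$ times the true content probability. This is the step I am least sure of. The resulting gap is of order $q_m$, so Hoeffding gives a failure probability $\exp(-\Omega(K q_m^2))$, which explains the $K\in O(\log T/q_m^2)$ scaling. A union bound over agents, actions and residue classes then yields $\Pr(\mathcal{E}_K^c)\le O(1/T)$, and substituting this into the decomposition gives the claimed bound.
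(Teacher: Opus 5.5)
Your skeleton matches the paper's proof. You use the same decomposition into $\kappa K\Xi$, the in-phase offset $T\Xi_\varepsilon$, and $T\Xi\Pr(\hat a_{(K)}\neq\efa^*)$. You restore independence by splitting phases into residue classes modulo $m$, apply Hoeffding per class, and get a margin from Assumption~\ref{assm:equilibrium}(iii) scaled by the access probability, hence $K\sim\log T/q_m^2$. The genuine gap is the step you flag, and it cannot be closed the way you propose. Because $b^i_{(\cdot),(k)}$ multiplies only the non-$\varnothing$ entries, the poll can fire at profiles outside $\efA_{\epsilon}^{\mathrm{eq}}$, where the true network-wide content probability is zero. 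So no bound of the form $(1-q_m)$ times that probability can hold. Concretely, suppose agent $i$ has no link during phases $k,\ldots,k+m-1$, which happens with probability $(1-p)^{(n-1)m}$. Then each of its rows for phase $k$ contains only $s^i_{(k)}$, and it increments $c^i(a^i_{(k)})$ exactly when $s^i_{(k)}=1$. That event reflects only $i$'s own tolerability against $a^{-i}_{(k)}$ and its own payoff. Such increments favor local actions that are tolerable against many $a^{-i}$ at high own payoff. For small $p$ they stay bounded away from zero while your margin vanishes with $q_m$, so the counter's argmax can be wrong. Even on $F^i_k$ the increment need not be correct. The poll also votes over $\overline{o}^i_{(k)},\ldots,\overline{o}^i_{(k+m-2)}$, whose rows for phase $k$ may be partial and hence biased toward $1$, and these can outvote the complete row in $\overline{o}^i_{(k+m-1)}$.

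The paper never analyzes \eqref{eq:polling}. In \eqref{eq:hattheta} it identifies agent $i$'s increment for phase $k$ with $\mathbb{I}\{a_{(k)}=a\}\,s_{(k)}\,r^{(i)}_{(k)}$, where $r^{(i)}_{(k)}\sim\mathrm{Bernoulli}(q^i_m)$ indicates access to the network-wide signal and is independent of play. Under that identification there are no bad increments, and $\theta(a)=q^i_m\xi^{C-\eSW(a)}/|A|$ on $\efA_{\epsilon}^{\mathrm{eq}}$ and $0$ off it. Lemma~\ref{lem:zeta} is then just the full-communication separation multiplied by $q^i_m$. To finish along these lines you must adopt the same exact-detection model, e.g.\ count phase $k$ only if the complete row is present in $\overline{o}^i_{(k+m-1)}$. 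Your faithful-copy observation then makes the increment exactly the paper's, with the presence event depending only on $G_k,\ldots,G_{k+m-1}$.

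There are two smaller slips. First, you lower-bound the optimum's mean with the factor $q_m$ but upper-bound the competitor's mean without any access factor. Assumption~\ref{assm:equilibrium}(iii) does not give $q_m>\xi^{\Xi_o}(|\efA_{\epsilon}^{\mathrm{eq}}|-1)$, so keep the common factor $\Pr(F^i_k)$ on both sides, as the paper keeps $q^i_m$ in every $\theta(a)$. Second, payoff closeness alone does not yield $\max_{a\in\Aeq}\SW(a)-\SW(\efa^*)=O(\varepsilon)$, since $\Aeq$ and $\efA_{\epsilon}^{\mathrm{eq}}$ can differ. The paper imports this bound from \cite[Lemma 1]{ref:Kiremitci26b}.
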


{\medskip}

The Networked-DOEL dynamics use in-phase $\varepsilon$-exploration to identify whether the explored actions are tolerable. However, this causes an offset on the regret. Furthermore, the exploration length $K\in O(\log T / q_m^2)$ is reversely proportional to the square of the probability lower bound $q_{m}$. As described in \eqref{eq:q_bound}, $q_{m}\in(0,1)$ increases with larger table size $m$, larger connectivity probability $p$ and smaller number of agents.

{\medskip}

\begin{proof}
We can write the expected regret as
\be\nn
\E[R_T] = T\SW^* - \bigg(\kappa\sum_{k=1}^K \SW(\efa_{(k)})+ (T-\kappa K) \SW(\hat{a}_{(K)})\bigg),
\ee
where $\efa_{(k)} = (\explore^i(a_{(k)}^i))_{i\in\mathcal{N}}$ is the exploration-perturbed action profile. Define $\Xi:= \SW^*-\min_a \SW(a)$ and $\Xi_\varepsilon := \SW^* - \SW(\efa^*)$. Then, we have
\be
\E[R_T] \leq \kappa K \Xi + T \Xi_\varepsilon + T\Xi\Pr(\hat{a}_{(K)}\neq \efa^*).
\ee 
Since the repeated play of normal-form games is a special case of stochastic games, \cite[Lemma 1]{ref:Kiremitci26b} yields that $0<\Xi_\varepsilon \leq \bar{C} \varepsilon$ for certain $\bar{C}>0$.

Next, we bound the probability $\Pr(\hat{a}_{(K)}\neq \efa^*)$ from above. The challenge is that agents may not access to the network-wide signals. Furthermore, whether agent $i$ can access to network-wide signal at phase $k$ is not independent of having access to network-wide signals at phases $\{k-m+1,\ldots,k+m-1\}$ since the shared tables consist of signals within a window of $m$ phases. To address this issue, we fix agent $i$ and consider $m$ separate empirical averages:
\be\label{eq:hattheta}
\hat{\theta}_{(\ell),l}^{(i)}(a):= \frac{1}{\ell}\sum_{k=1}^\ell \mathbb{I}_{\{a_{mk + l} = a\}} \cdot s_{(mk+l)} \cdot r_{(mk+l)}^{(i)}
\ee
for $l=0,1,\ldots,m-1$, where $s_{(mk+l)} = \prod_{j} s_{(mk+l)}^j \in \{0,1\}$ is $1$ if there is network-wide content endorsement and $0$ otherwise, and $r_{(mk+l)}^{(i)}\sim \mathrm{Bernoulli}(q_m^i)$ is the indicator of whether agent $i$ can get access to the network-wide signal possibly with some delay over the randomly selected communication networks according to the Erdős--Rényi $G(n,p)$ model. Due to the in-phase separation, $r_{(mk+l)}^{(i)}$ is independent of other $r_{(mh+l)}^{(i)}$ for $h\neq k$. Additionally having access to network-wide signal is independent of the actions taken and the signals generated.  

Based on \eqref{eq:hattheta} and the randomness on $\mathbb{I}_{\{a_{mk + l} = a\}} \cdot s_{(mk+l)} \cdot r_{(mk+l)}^{(i)}\in \{0,1\}$  independent across $k$, the Hoeffding bound yields that
\be\label{eq:Hoeffding}
\Pr(|\hat{\theta}_{(\ell),l}^{(i)}(a) - \theta(a)| \geq \zeta) \leq 2e^{-2\ell \zeta^2}
\ee
for $l=0,1,\ldots,m-1$, where
\be
\theta(a) := \left\{\begin{array}{ll}
\frac{q_m^i}{|A|} \xi^{C-\eSW(a)} &\mbox{if } a\in \efA_{\epsilon}^{\mathrm{eq}}\\
0 & \mbox{if } a\notin \efA_{\epsilon}^{\mathrm{eq}}
\end{array}\right..
\ee

Observe that
\be
\argmax_{a} \{\theta(a)\} = \argmax_{a\in \efA_{\epsilon}^{\mathrm{eq}}} \{\eSW(a)\}
\ee
since $\xi\in(0,1)$ and $C\geq \eSW(a)$ for all $a$. The following lemma characterizes the separation between $\theta(\efa^*)$ and others $\theta(a)$ for $a\neq \efa^*$.

\begin{lemma}\label{lem:zeta}
Under Assumption \ref{assm:equilibrium}, we have
\be
\theta(\efa^*) - \zeta > \sum_{a\in \efA_{\epsilon}^{\mathrm{eq}}\setminus \efa^*} (\theta(a) + \zeta),
\ee
where
\be
\zeta:= \frac{q_m^i \delta}{|A||\efA_{\epsilon}^{\mathrm{eq}}|} \xi^{C - \max_{a\in \efA_{\epsilon}^{\mathrm{eq}}\setminus \efa^*}\{\eSW(a)\}}.
\ee
\end{lemma}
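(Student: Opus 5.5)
The proof is a direct algebraic comparison; nothing probabilistic is needed. Throughout, write $N:=|\efA_{\epsilon}^{\mathrm{eq}}|$, $c:=q_m^i/|A|>0$, and $M:=\max_{a\in \efA_{\epsilon}^{\mathrm{eq}}\setminus \efa^*}\{\eSW(a)\}$. By Assumption~\ref{assm:equilibrium}(iii), $\eSW(\efa^*)-M=\Xi_o>0$. With this notation,
\[
\theta(\efa^*)=c\,\xi^{C-\eSW(\efa^*)},\qquad \zeta=\frac{c\,\delta}{N}\,\xi^{C-M}.
\]

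The first step bounds the right-hand side from above. Each $a\in \efA_{\epsilon}^{\mathrm{eq}}\setminus\efa^*$ satisfies $\eSW(a)\le M$. Since $\xi\in(0,1)$, this gives $\theta(a)=c\,\xi^{C-\eSW(a)}\le c\,\xi^{C-M}$. The sum has exactly $N-1$ terms, so
\[
\sum_{a\in \efA_{\epsilon}^{\mathrm{eq}}\setminus \efa^*}(\theta(a)+\zeta)\le (N-1)\,c\,\xi^{C-M}+(N-1)\zeta .
\]
Moving the $-\zeta$ from the left-hand side to the right, it suffices to show
\[
c\,\xi^{C-\eSW(\efa^*)} > (N-1)\,c\,\xi^{C-M}+N\zeta .
\]
By the definition of $\zeta$, the right-hand side equals $c\,(N-1+\delta)\,\xi^{C-M}$.

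The second step reduces this to the assumption on $\xi$. Dividing by the positive quantity $c\,\xi^{C-M}$, the required inequality becomes $\xi^{-(\eSW(\efa^*)-M)}=\xi^{-\Xi_o}>N-1+\delta$. Assumption~\ref{assm:equilibrium}(iii) gives $0<\xi<(N+\delta)^{-1/\Xi_o}$. Because $x\mapsto x^{\Xi_o}$ is strictly increasing on $(0,\infty)$ for $\Xi_o>0$, it follows that $\xi^{\Xi_o}<(N+\delta)^{-1}$, hence $\xi^{-\Xi_o}>N+\delta>N-1+\delta$. This closes the chain of inequalities.

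The only delicate point is bookkeeping rather than analysis. The right-hand side carries $N-1$ copies of $\zeta$ and the left-hand side carries one more, which is why $\zeta$ carries the factor $\delta/N$: the $N$ copies combine into exactly the $\delta$ slack. I would also note that the assumption actually yields $N+\delta$, so there is one unit to spare.

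The degenerate case $N=1$ must be handled separately. There the sum is empty and $M$ is a maximum over the empty set, so $M$ and $\zeta$ need a convention. Taking $M=-\infty$ gives $\zeta=0$, and the claim reduces to $\theta(\efa^*)>0$, which holds since $c>0$ and $\xi>0$. I expect this convention issue to be the main thing to state carefully; the rest is routine monotonicity of $\xi^x$.
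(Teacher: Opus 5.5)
Your proof is correct and is essentially the argument the paper delegates to \cite[Lemma 1]{ref:Kiremitci26a}: the positive factor $q_m^i/|A|$ is common to $\theta(\cdot)$ and $\zeta$ and cancels, leaving $\xi^{-\Xi_o}>|\efA_{\epsilon}^{\mathrm{eq}}|-1+\delta$, which Assumption~\ref{assm:equilibrium}(iii) supplies (with a unit to spare, as you note). One small remark on your degenerate case: the convention $\zeta=0$ makes the lemma true but would be useless in the subsequent Hoeffding step of the theorem, so for $|\efA_{\epsilon}^{\mathrm{eq}}|=1$ it is better to take any positive $\zeta<\theta(\efa^*)$.
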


\begin{proof}
The proof follows from \cite[Lemma 1]{ref:Kiremitci26a} with additional scaling factor $q_m^i$.
\end{proof}

The empirical average of joint action $a$ when there is network-wide content endorsement and agent $i$ can access to that network-wide signal is given by
\be
\hat{\theta}_{(\ell \kappa)}^{(i)}(a) := \frac{1}{m}\sum_{l=0}^{m-1} \hat{\theta}_{(\ell),l}^{(i)}(a).
\ee
Correspondingly, the local empirical average is given by
\be
\hat{\theta}_{(\ell \kappa)}^{i}(a^i) = \sum_{\tilde{a}\in A: \tilde{a}^i = a^i}\hat{\theta}_{(\ell \kappa)}^{(i)}(\tilde{a}).
\ee 

The following proposition shows that agent $i$ can locally identify the best action.

\begin{proposition}\label{prop}
Under Assumptions \ref{assm:equilibrium} and \ref{assm:estimation}, if $|\hat{\theta}_{(k)}^{(i)}(a) - \theta(a)|<\zeta$, where $\zeta$ is as described in Lemma \ref{lem:zeta}, then
\be
\hat{\theta}_{(\ell \kappa)}^{i}(\efa^{i,*}) > \hat{\theta}_{(\ell \kappa)}^{i}(a^i)\quad\forall a^i\neq \efa^{i,*}.
\ee
\end{proposition}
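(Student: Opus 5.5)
We argue directly from the aggregation structure. The hypothesis is that $|\hat{\theta}_{(\ell\kappa)}^{(i)}(a)-\theta(a)|<\zeta$ holds for every joint action $a\in A$. In practice this is obtained from the $m$ interleaved Hoeffding bounds \eqref{eq:Hoeffding} and the triangle inequality over the average $\frac1m\sum_l$.

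\textbf{Lower bound at the optimal local action.} Decompose the local statistic as
\be\nn
\hat{\theta}_{(\ell\kappa)}^{i}(\efa^{i,*}) = \sum_{\tilde a:\tilde a^i=\efa^{i,*}}\hat{\theta}_{(\ell\kappa)}^{(i)}(\tilde a).
\ee
Every summand is nonnegative, since each is an average of products of indicators. Hence the sum is at least the term $\tilde a=\efa^*$, and
\be\nn
\hat{\theta}_{(\ell\kappa)}^{i}(\efa^{i,*}) \ge \hat{\theta}_{(\ell\kappa)}^{(i)}(\efa^*) > \theta(\efa^*)-\zeta.
\ee

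\textbf{Upper bound at any other local action.} Fix $a^i\neq \efa^{i,*}$. Every $\tilde a$ with $\tilde a^i=a^i$ satisfies $\tilde a\neq\efa^*$. Split these $\tilde a$ into two groups.
\begin{itemize}
\item If $\tilde a\notin\efA_{\epsilon}^{\mathrm{eq}}$, then $\hat{\theta}^{(i)}(\tilde a)=0$ exactly. Indeed, by Assumption~\ref{assm:estimation} the explored action is judged $\epsilon$-tolerable by the effective payoffs, so some agent's signal is deterministically $0$ and $s_{(mk+l)}=0$ whenever $a_{(mk+l)}=\tilde a$.
\item If $\tilde a\in\efA_{\epsilon}^{\mathrm{eq}}\setminus\efa^*$, then $\hat{\theta}^{(i)}(\tilde a)<\theta(\tilde a)+\zeta$.
\end{itemize}
Summing the two groups,
\be\nn
\hat{\theta}_{(\ell\kappa)}^{i}(a^i) < \sum_{\tilde a\in \efA_{\epsilon}^{\mathrm{eq}}\setminus\efa^*}(\theta(\tilde a)+\zeta).
\ee
By Lemma~\ref{lem:zeta}, the right-hand side is strictly smaller than $\theta(\efa^*)-\zeta$. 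Combining with the lower bound gives the claimed strict inequality.

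\textbf{Main obstacle.} The delicate step is handling the non-equilibrium profiles. A naive use of the hypothesis would add $+\zeta$ for every $\tilde a\notin\efA_{\epsilon}^{\mathrm{eq}}$, and such terms are not covered by Lemma~\ref{lem:zeta}. This is why Assumption~\ref{assm:estimation} is needed: it makes those empirical terms vanish identically. It is also the reason the lemma's sum runs only over $\efA_{\epsilon}^{\mathrm{eq}}\setminus\efa^*$.

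The argument also needs the sum over equilibria sharing the local action $a^i$ to be dominated by the sum over all non-optimal equilibria. This holds because every term $\theta(\tilde a)+\zeta$ is nonnegative, so the subset sum is bounded by the full sum.
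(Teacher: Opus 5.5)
Your proof is correct. It is essentially the argument the paper defers to via \cite[Proposition 1]{ref:Kiremitci26a}: lower-bound the optimal local action by the single term $\hat{\theta}^{(i)}(\efa^*)$, upper-bound any other local action by the sum over $\efA_{\epsilon}^{\mathrm{eq}}\setminus\efa^*$, and close with Lemma~\ref{lem:zeta}. Your explicit step that non-equilibrium profiles contribute exactly zero under Assumption~\ref{assm:estimation} is the adaptation to the equilibrium-constrained setting that the paper leaves implicit, and it matches the paper's union bound in $\beta$ running only over $|\efA_{\epsilon}^{\mathrm{eq}}|$.
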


\begin{proof}
The proof follows from \cite[Proposition 1]{ref:Kiremitci26a}.
\end{proof}

Lastly, by \eqref{eq:Hoeffding}, the triangle inequality and the union bound yield that
\be
\Pr(|\hat{\theta}_{(K)}^{(i)}(a) - \theta(a)| \geq \zeta) \leq 2me^{-2K \zeta^2/\kappa}.
\ee
Correspondingly, we have
\be\label{eq:lastbound}
Pr(|\hat{\theta}_{(K)}^{(i)}(a)-\theta(a)| \leq \zeta,\forall a\in A) \geq 1-\beta
\ee
where $\beta := 2m|\efA_{\epsilon}^{\mathrm{eq}}|e^{-2K\zeta^2/\kappa}$. 

By Proposition \ref{prop} and \eqref{eq:lastbound}, we have
\be
\Pr(\hat{a}_{(K)}\neq \efa^*)\leq \beta.
\ee
Then, choosing $K= \frac{1}{2\zeta^2}\log(4m|\efA_{\epsilon}^{\mathrm{eq}}|T\zeta^2)$ yields that $\beta = \frac{1}{2T\zeta^2}$, which completes the proof.
\end{proof}

\section{NUMERICAL RESULTS}
\label{sec:sim_game}

\begin{figure*}[!t]
    \centering

    \begin{subfigure}[t]{0.48\textwidth}
        \centering
        \includegraphics[width=\linewidth]{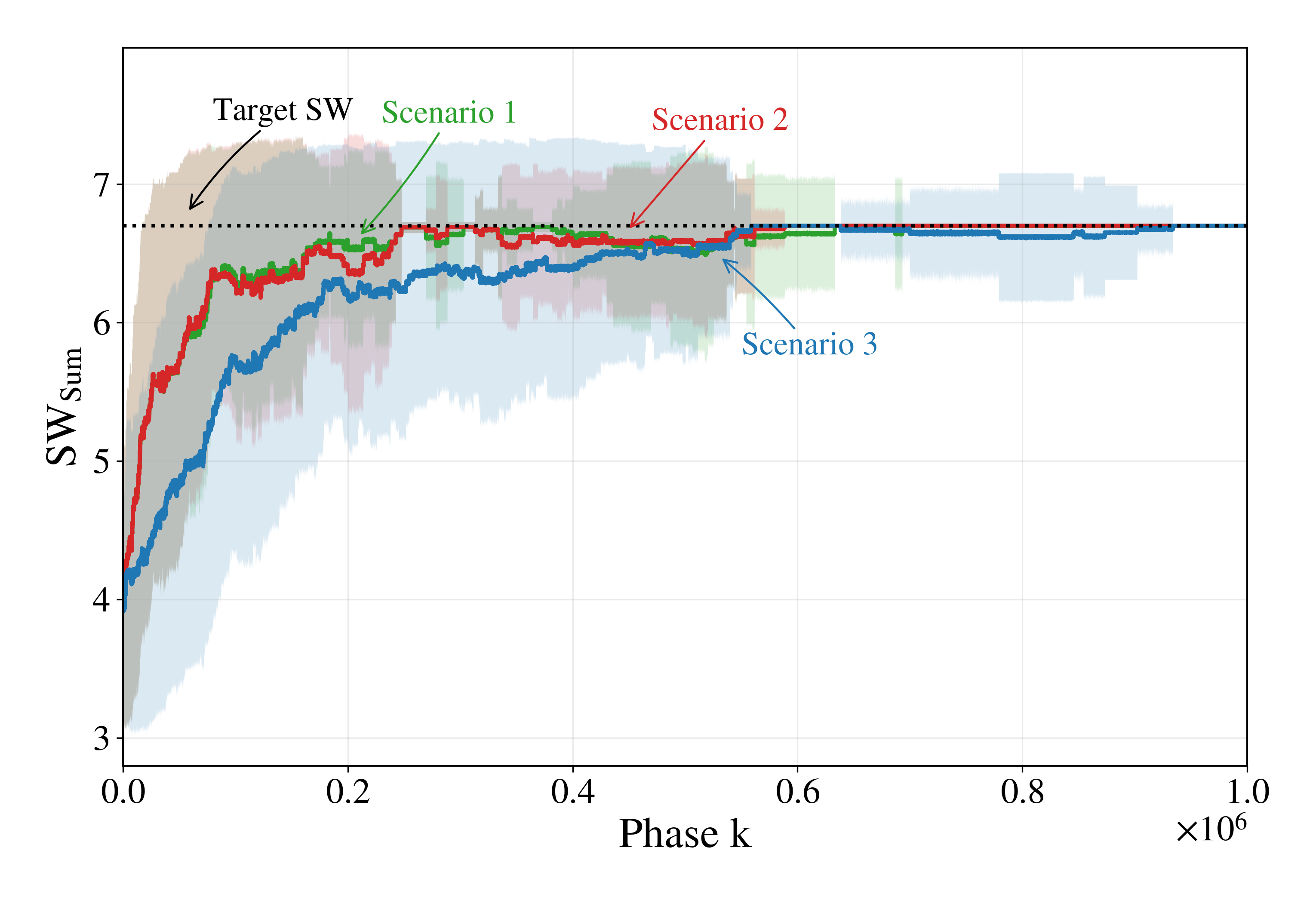}
        \caption{Sum of Local Payoffs as Social Welfare}
        \label{fig:gm_alpha0}
    \end{subfigure}
    \hfill
    \begin{subfigure}[t]{0.48\textwidth}
        \centering
        \includegraphics[width=\linewidth]{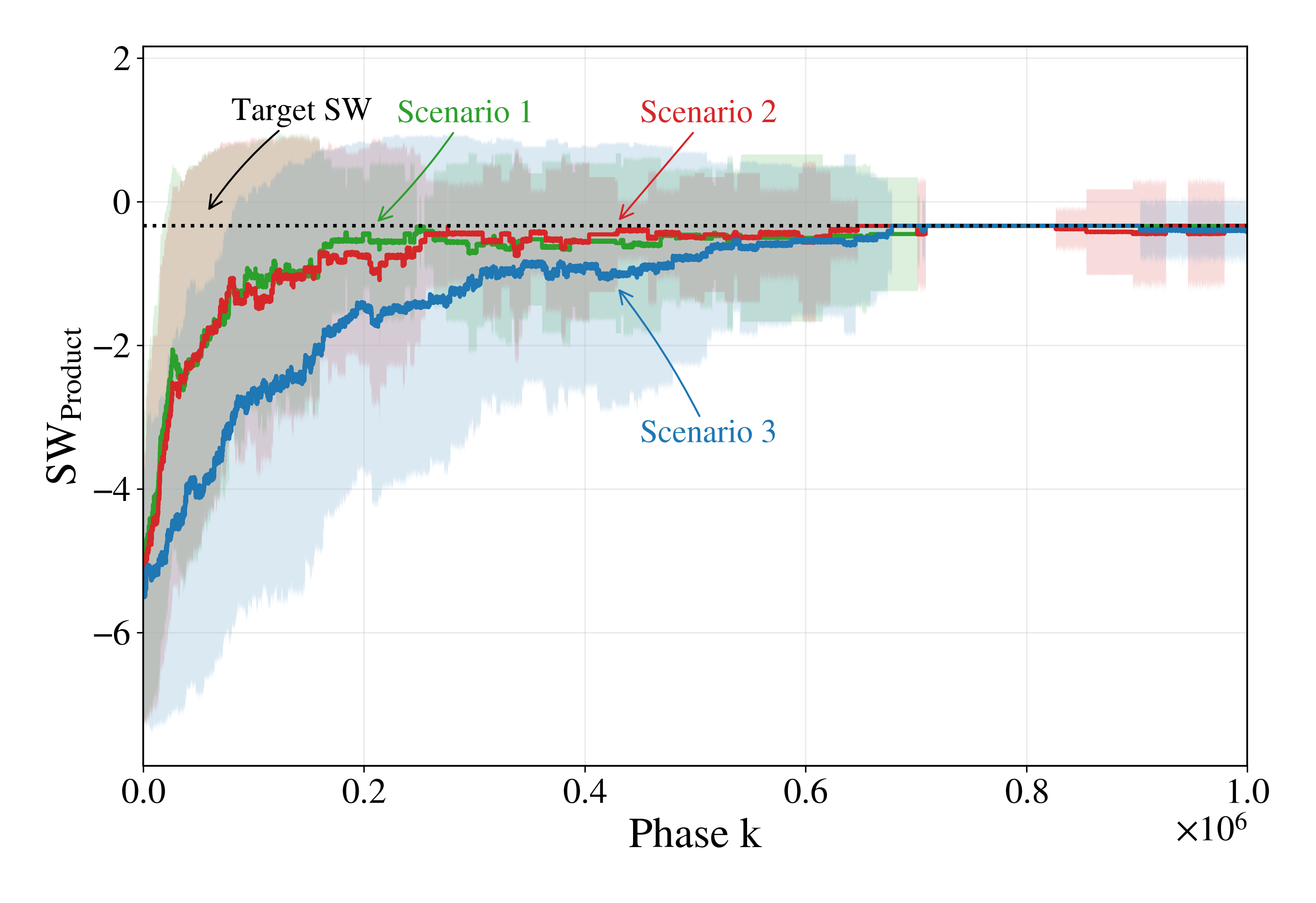}
        \caption{Product of Local Payoffs as Social Welfare}
        \label{fig:gm_alpha1}
    \end{subfigure}
    \medskip


    \begin{subfigure}[t]{0.48\textwidth}
        \centering
        \includegraphics[width=\linewidth]{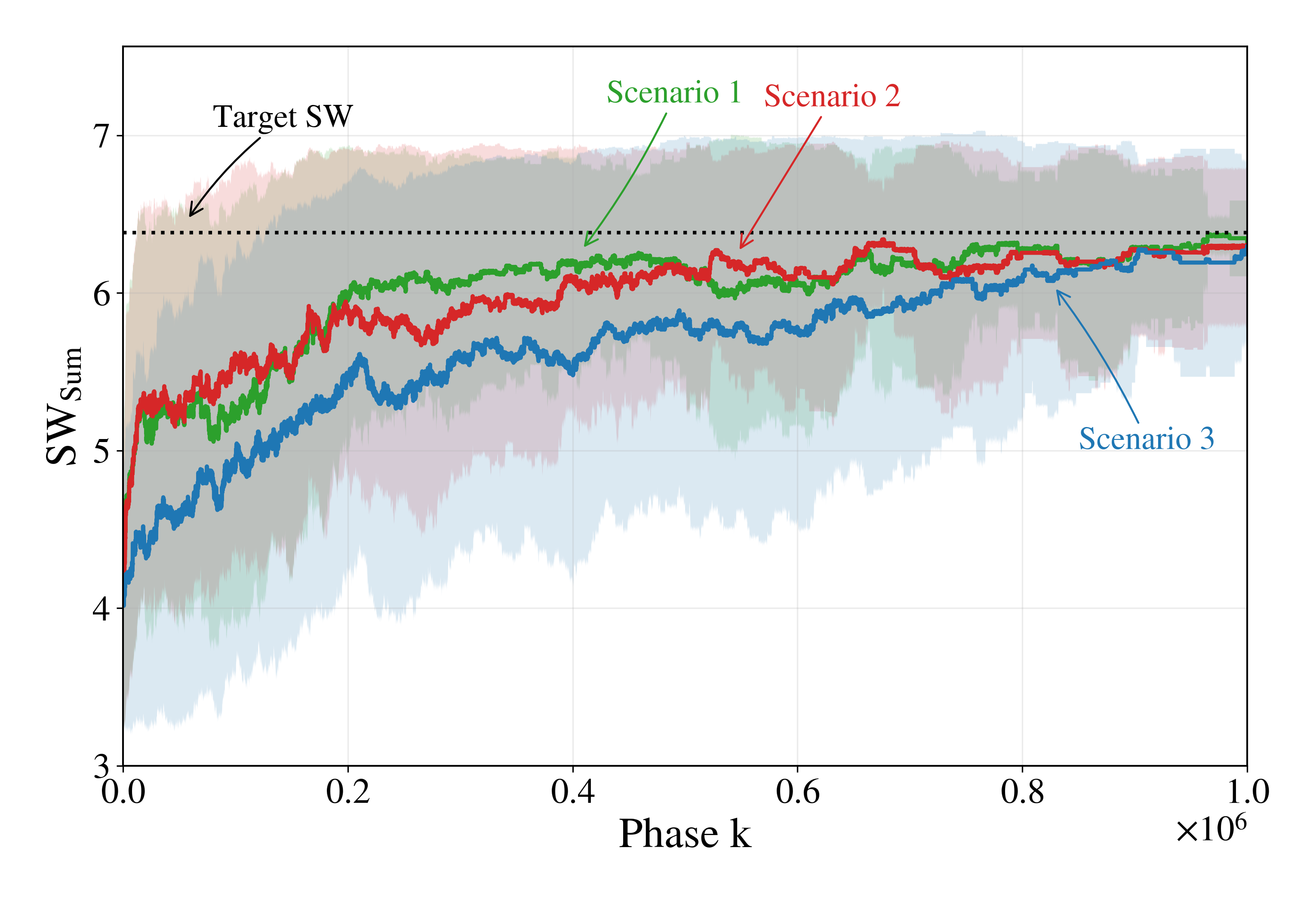}
        \caption{Sum of Local Payoffs as Social Welfare}
        \label{fig:eq_alpha0}
    \end{subfigure}
    \hfill
    \begin{subfigure}[t]{0.48\textwidth}
        \centering
        \includegraphics[width=\linewidth]{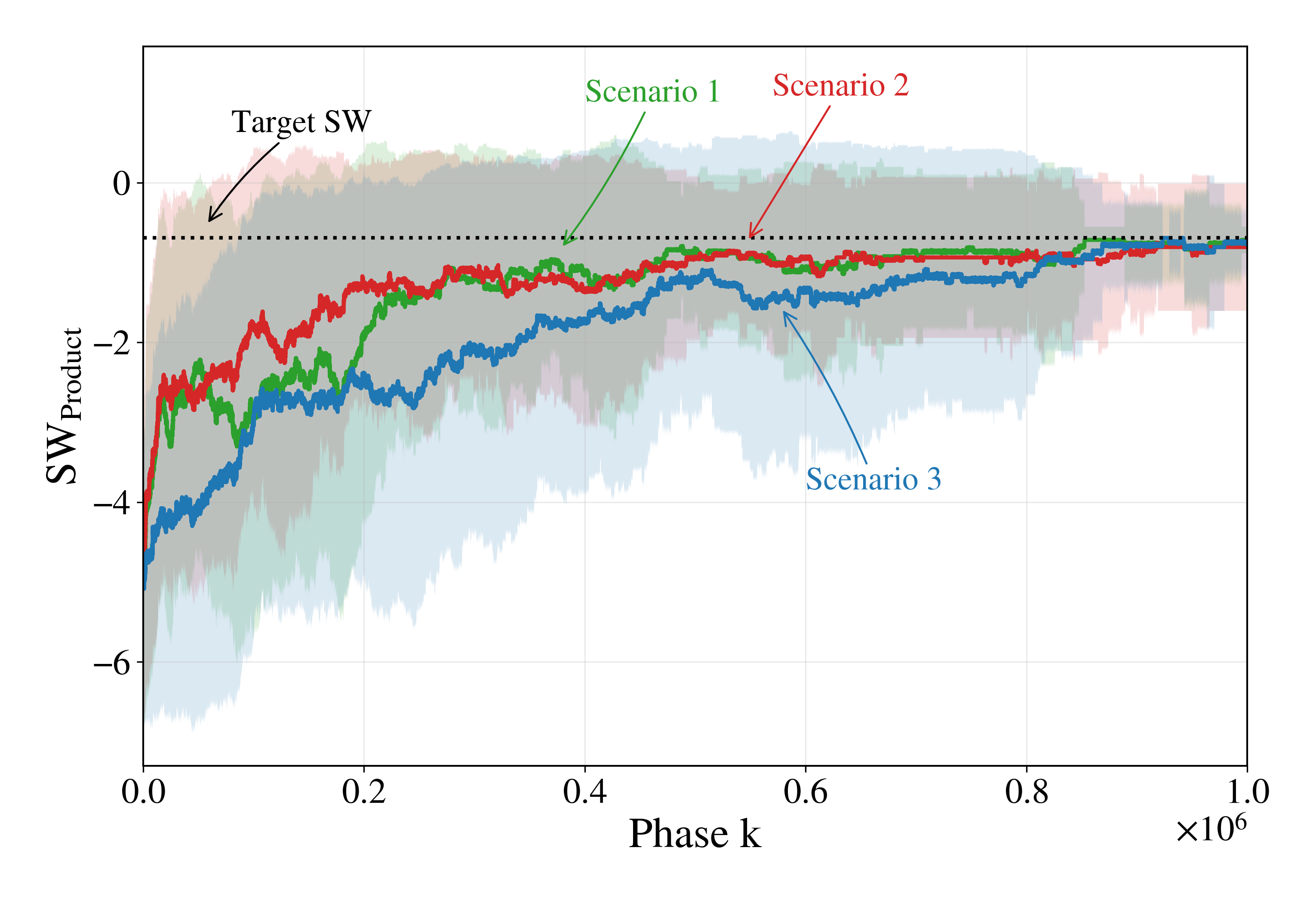}
        \caption{Product of Local Payoffs as Social Welfare}
        \label{fig:eq_alpha1}
    \end{subfigure}

    
 \caption{Comparison of the proposed networked learning scheme for optimal equilibrium selection and social welfare maximization under three communication scenarios. Panels (a) and (c) use utilitarian welfare (sum of local payoffs), while panels (b) and (d) use proportional-fair welfare (sum of log-payoffs). \textcolor{darkgreen}{Scenario~1} is a fully connected baseline (ideal communication), i.e., the same communication setting as the original DOEL dynamics. \textcolor{red}{Scenario~2} and \textcolor{blue}{Scenario~3} use the \textit{restricted Erd\H{o}s--R\'enyi} graph (edge probability $p=0.2$, conditioned on connectivity) with per-phase independent link drops with probability $0.1$ for each edge, and differ in the table depth: $m=5$ (\textcolor{red}{Scenario~2}) versus $m=3$ (\textcolor{blue}{Scenario~3}), so that $m$ is above versus possibly below the backbone diameter $D\in\{3,4\}$. The dashed horizontal line indicates the target welfare level; curves show averages over 50 runs with shaded $\pm 1$ standard deviation.}
    \label{fig:combined_four_panel}
\end{figure*}

We consider the repeated play of a normal-form game with $n=7$ agents and $|A^i|=3$ actions per agent. 
The payoffs are generated randomly over $[0,1]$ once and fixed throughout all experiments. For the realized game instance, the induced equilibrium landscape contains three distinct pure Nash equilibria, each achieving a different value of the social welfare objective. In addition, the (unconstrained) social maximizer,
$a^{\mathrm{SW}}\in \arg\max_{a\in A}\; \SW(a)$,
is unique and does not coincide with any of the equilibrium action profiles, highlighting a nontrivial gap between stability (equilibrium) and efficiency (social welfare).

On this same payoff matrix we evaluate both equilibrium selection and unconstrained welfare maximization, under two welfare specifications: (i) the utilitarian case $\SW_{\mathrm{sum}}(a)=\sum_{i=1}^7 u^i(a)$ and (ii) the proportional-fair case $\SW_{\mathrm{product}}(a)=\sum_{i=1}^7 \log(u^i(a))$ (with a small positive floor applied to avoid $\log(0)$). 
The same networked learning dynamics is then run in the identical environment under both cases, and performance is reported with respect to (a) the optimal equilibrium under the chosen $\SW(\cdot)$ among the three equilibria, and (b) the unique social maximizer $a^{\mathrm{SW}}$ as a separate efficiency benchmark.

We remark that the joint action space has the size of $|A|=2187$, i.e., there are $2187$ possible joint action profiles in the stage game.

We consider a two-stage communication model which we called \textit{restricted Erdős--Rényi}. First, a fixed Erdős--Rényi backbone graph is generated at initialization of each run and retained throughout that run, so the 50 independent runs use different backbones. Pairs that do not share a backbone edge are never connected. The backbone is sampled so that its diameter is $D \in \{3,4\}$. Then, at each phase, each backbone edge is independently dropped with a probability, yielding the active communication graph for that phase. Communication remains neighbor-only; however, links are unreliable over time: at the end of each phase, each edge in the base graph is independently active with a probability.

In all simulations, we use \textit{restricted Erdős--Rényi} with edge probability $p=0.2$ (conditioned on connectivity), and keep this base topology fixed throughout learning with each edge drop probability $0.1$. Since communication occurs only once per phase (at the phase boundary), the active neighbor set is re-sampled once per phase and remains fixed during the $\kappa$ within-phase interaction rounds. Hence, the set of available communication links varies across phases due to random link drops, even though the underlying base graph is time-invariant. We run the process for $K=10^6$ phases.

To study equilibrium selection, we set the phase length to $\kappa=250$ (so the total number of stage-game rounds is $T=\kappa K=2.5\times 10^8$), use $\varepsilon=0.1$ exploration (implemented as $\varepsilon$-greedy), and set the near-best-response tolerance to $\epsilon^i=\epsilon=10^{-5}$ for all agents. The semantic randomization parameter is as $\xi=0.35$ and $\xi=0.4$ for payoff-sum and payoff-product welfares, respectively. 

To obtain the social-welfare maximizer benchmark, we do not divide the time into phases, i.e., $\kappa=1$, $\varepsilon=0$, and $\epsilon^i=+\infty$ (so every action is tolerable), and choose $\xi$ small so that the induced stationary behavior concentrates on the welfare maximizer. Concretely, we use $\xi=10^{-4}$ for payoff-sum maximization and $\xi=1.6\times 10^{-4}$ for payoff-product maximization. As above, we keep the same horizon $K=10^6$.

Fig.~\ref{fig:combined_four_panel} compares three communication scenarios. The plotted curves show the average achieved welfare over 50 independent runs, and the shaded regions indicate $\pm1$ standard deviation. The dashed horizontal line marks the target welfare level in each panel. Scenario~1 is a fully connected baseline, i.e., the same communication setting as the original DOEL dynamics \cite{ref:Kiremitci26b}.

A comparison with the no-communication payoff-based dynamics of \cite{ref:Marden14,ref:Pradelski12} is not included, because the instance is randomly sampled to represent a general normal-form game, and such a draw need not satisfy the structural assumptions those methods require. Scenarios~2 and~3 operate over the restricted Erd\H{o}s--R\'enyi network with link drops, and differ only in the table depth: $m=5$ for Scenario~2 and $m=3$ for Scenario~3.

On a backbone of diameter $D$, a content bit reaches every agent within $m$ phases if $m\ge D$, unless the same agent is dropped for $D$ consecutive phases. Independent drop probability $0.1$ makes that event rare, so Scenario~2 ($m=5\ge D$) recovers network-wide signals with high probability, whereas Scenario~3 ($m=3$, which can be smaller than $D$) can leave some messages unrecovered. This is the experimental counterpart of $q_m$ in Theorem~1. Unlike a fully connected network, some agent pairs are never linked, and independent dropouts can leave an agent disconnected from the rest of the network in a given phase; reconstruction still proceeds through the table window of length $m$.

In the welfare-maximization setting shown in Fig.~\ref{fig:combined_four_panel}(\subref{fig:gm_alpha0}) and Fig.~\ref{fig:combined_four_panel}(\subref{fig:gm_alpha1}), the achieved welfare increases rapidly in the early phases and stabilizes near the target level. Scenario~1 converges fastest and with the smallest variability. Scenario~2 stays close to Scenario~1: even though some agents are never directly linked and may be disconnected from the network in a given phase, their information is still recovered through the table window. Scenario~3 may not recover all information, since the table depth can be smaller than the backbone diameter, and is slower and more variable, but still reaches the target.

A similar trend is observed in the equilibrium-selection setting in Fig.~\ref{fig:combined_four_panel}(\subref{fig:eq_alpha0}) and Fig.~\ref{fig:combined_four_panel}(\subref{fig:eq_alpha1}), where convergence is slower and the separation between scenarios is more pronounced. This is expected since equilibrium selection restricts the feasible set to equilibria, unlike unconstrained welfare maximization. Still, both Scenario~2 and Scenario~3 approach the target equilibrium welfare, with Scenario~2 again closer to Scenario~1.

\section{CONCLUSION}
\label{sec:conclusion}

We studied decentralized learning of socially optimal equilibria in repeated normal-form games under dynamic communication networks and proposed Networked-DOEL, which combines low-bandwidth semantic signaling with time-stamped, time-stacked information exchange. We established logarithmic regret guarantees with an in-phase exploration perturbation and showed that the method remains effective under intermittent connectivity, with temporal redundancy improving robustness. Overall, the results highlight that scalable and resilient equilibrium selection is achievable with minimal communication. The dynamics remain resilient to intermittent links in the sense that a dropped or delayed bit can still be reconstructed inside a table window of length $m$. Promising directions for future research include developing more efficient exploration mechanisms, extending the framework to stochastic games, and incorporating function approximation methods to handle large-scale action spaces.







\bibliographystyle{IEEEtran}
\bibliography{ref}  




\end{document}